\newtheorem{definition}{Definition}
\newcommand{\be}{\begin{equation}}
\newcommand{\ee}{\end{equation}}
\newtheorem{thm}{Theorem}[section]
\newtheorem{prop}[thm]{Proposition}
\newtheorem{remark}[thm]{\it Remark}
\begin{document}


\title{Matrix factorizations and pentagon  maps}

\author[P. Kassotakis]{Pavlos Kassotakis}
\address{Pavlos Kassotakis, Department of Mathematical Methods in Physics, Faculty of Physics,
University of Warsaw, Pasteura 5, 02-093, Warsaw, Poland}
 \email{Pavlos.Kassotakis@fuw.edu.pl, pavlos1978@gmail.com}


\keywords{Matrix factorization problems, pentagon maps,   Yang-Baxter maps, discrete integrable systems}


\begin{abstract}
We propose a specific class of matrices which participate in factorization problems that turn to be equivalent to constant and entwining (non-constant) pentagon, reverse-pentagon or  Yang-Baxter  maps, expressed in  non-commutative variables. In detail,  we show that factorizations of order $N=~2$ matrices of this specific class are equivalent to the  {\em homogeneous normalization map}.
  From order $N=3$ matrices, we obtain an extension of the  homogeneous normalization map, as well as novel entwining  pentagon, reverse-pentagon and Yang-Baxter maps.
\end{abstract}

\setcounter{tocdepth}{2}


\maketitle
\section{Introduction}
The interplay of matrix factorization problems and discrete integrable systems respectively in one, two or three independent variables has been introduced and studied in various seminal papers, see respectively \cite{Symes:1982,Deift:1989,Moser:1991}, \cite{Nijhoff:1989,Adler:1994}  or \cite{Maillet:1989,korepanov:1995}. Furthermore, there exists an intrinsic relation between matrix factorization problems and discrete integrability and  it stands as an active area of research as the recent developments \cite{Veselov:20031,Veselov:2003b,koul-2009,Doliwa:2020,Rizos:2022} suggest. In this paper, we study a particular class of matrices that participate in specific matrix re-factorization problems. These re-factorization problems turn out to be equivalent to pentagon, reverse-pentagon, or Yang-Baxter maps.


 Pentagon maps serve as set theoretical solutions of the pentagon equation
  \begin{align}\label{pent}
 S^{(1)}_{12}S^{(3)}_{13}S^{(5)}_{23}=S^{(4)}_{23}S^{(2)}_{12},
 \end{align}
  that is the first non-trivial example of the so-called polygon equations c.f. \cite{Dimakis:2015,korepanov:2022}. In relation (\ref{pent}), the superscripts denote objects ($S^{(q)}$) that might differ f.i. these objects might be operators or maps; so we have respectively the ``operator" and the ``set-theoretic" version of (\ref{pent}). In the set-theoretic version of (\ref{pent}) the subscripts denote the sets where the maps $S^{(q)}$ act. For detailed interpretation of (\ref{pent}) and for the definition of the reverse-pentagon and the Yang-Baxter equation, see Section \ref{sec:2}.

    The operator form of (\ref{pent}) first appeared in \cite{MoorSeib:89} in relation with conformal field theory. In \cite{Maillet:1990} it was considered inside the context of three-dimensional integrable systems, while the set-theoretical version of  (\ref{pent}) was studied in \cite{Zakrzewski:1992} in connection with Poisson maps, see also \cite{Skandalis:1993}. Furthermore, the pentagon relation (\ref{pent}) itself, as it was shown in \cite{Korepanov:2000}, serves as  a manifestation of the $3\leftrightarrow 2$ Pachner move \cite{Pachner:1991}, where three tetrahedra with a common edge are replaced by two tetrahedra with a common face on a triangulation of a piecewise-linear 3-manifold.
  Pentagon maps appeared in the context of Roger's dilogarithm \cite{Faddeev:1994,Kashaev:1999} and are also related to the closure relation of the Lagrangian multiform theory in the setting of discrete integrable systems \cite{Lobb:2009,JNF}, as well as to cluster algebras \cite{Volkov:2011,Fordy_Hone:2014}.  For further connections and interrelations of the pentagon equation and maps with various areas of Mathematics and Physics we refer to  \cite{Dimakis:2015}.   While for recent developments on pentagon maps we refer to \cite{Zakrzewski:1992,Skandalis:1993,Kashaev:1998,Kashaev:1999,Doliwa:2014p,Doliwa:2020,Catino:2020,Colazzo:2020}.

This article is organized as follows. We begin  with a brief introduction followed by Section \ref{sec:2} where we present the  definitions and the basic mathematical notions used in this article. In  Section \ref{sec:3}, we introduce a special class of matrices that participate in matrix factorization problems that turn equivalent to pentagon, reverse-pentagon and  Yang-Baxter  maps.
 In particular, via these matrix re-factorization problems, when the order $N$ of the matrices is $N=2,$  we recover a well known pentagon map that is the homogeneous normalization map \cite{Doliwa:2014p,Doliwa:2020}.  When $N=3$, we obtain an extension of the homogeneous normalization map, along with novel entwining Yang-Baxter and pentagon maps.
 Note that  the proof that  the obtained mappings are pentagon maps, follows from direct computation. 
 Alternatively, since  the {\em local $(N-1)-$gon equation} determines an $N-$gon map   \cite{Dimakis:2015},  proof of an anticipated result that the class of matrices introduced here, satisfies
the {\em local tetragon ($2-$gon) equation},  would serve as an alternative proof of the pentagonal property.
 In appendix \ref{app1}, we explicitly provide the  tetrahedron and 4-simplex  maps associated with the presented extension of the normalization map. In the concluding Section  \ref{sec:4}, we summarize our results and we briefly discuss  the case where the order of the matrices that participate in factorizations is $N=4$.  Finally note that all mappings presented in this article  are expressed in totally non-commutative variables; that is no commutation relation is assumed among them.

\section{Preliminaries on pentagon maps and more} \label{sec:2}

Let $\mathbb{X}$ be a set. We proceed with the following definitions.

\begin{definition}
The  maps $R: \mathbb{X} \times \mathbb{X} \rightarrow \mathbb{X} \times \mathbb{X}$ and $\widehat R: \mathbb{X} \times \mathbb{X} \rightarrow \mathbb{X} \times \mathbb{X}$  will be called $M\ddot{o}b$ equivalent if it exists a bijection $\kappa: \mathbb{X} \rightarrow \mathbb{X}$ such that $(\kappa\times \kappa) \, R=\widehat R \, (\kappa\times \kappa).$
\end{definition}

Let $R^{(q)},S^{(q)},\bar S^{(q)}: \mathbb{X} \times \mathbb{X}\ni(x,y)\mapsto (u,v)=(u(x,y),v(x,y))\in \mathbb{X}\times \mathbb{X},$ $q\in\{1,\ldots,5\},$ be  maps and let  $R^{(q)}_{ij}$ $i<j\in\{1,2,3\},$ denote the maps that act as  $R^{(q)}$ on the $i-$th and $j-$th factor of $\mathbb{X} \times \mathbb{X}\times \mathbb{X},$ i.e.
$R^{(q)}_{12}:(x,y,z)\mapsto (u(x,y),v(x,y),z),$ $R^{(q)}_{13}:(x,y,z)\mapsto (u(x,z),y,v(x,z)),$ and $R^{(q)}_{23}:(x,y,z)\mapsto (x,u(y,z),v(y,z))$ and similarly define the maps $S^{(q)}_{ij},\bar S^{(q)}_{ij}.$ Then we have the respective definitions of entwining (non-constant) {\em Yang-Baxter}, {\em pentagon}, {\em reverse pentagon}, {\em tetrahedron} and {\em 4-simplex} maps. Remember that the superscripts denote maps that might differ.

\begin{definition}
The maps $R^{(q)}: \mathbb{X} \times \mathbb{X}\rightarrow  \mathbb{X} \times \mathbb{X},$
  will be called {\em  entwining (non-constant) Yang-Baxter maps} 
   if they satisfy the {\em  entwining  (non-constant) Yang-Baxter equation} \cite{Yang:1967,Baxter:1982} (see Figure \ref{fig1})
\begin{align*}
R^{(1)}_{12}\circ R^{(2)}_{13}\circ R^{(3)}_{23}= R^{(3)}_{23}\circ R^{(2)}_{13}\circ R^{(1)}_{12}.
\end{align*}
The maps $S^{(q)}: \mathbb{X} \times \mathbb{X}\rightarrow  \mathbb{X} \times \mathbb{X},$
 will be called {\em entwining (non-constant) pentagon maps} 
 if they satisfy the {\em  entwining (non-constant) pentagon equation} \cite{Biedenharn_Louck:81,MoorSeib:89} (see Figure \ref{fig2})
\begin{align*}
S^{(1)}_{12}\circ S^{(3)}_{13}\circ S^{(5)}_{23}= S^{(4)}_{23}\circ S^{(2)}_{12}.
\end{align*}
The maps $\bar S^{(q)}: \mathbb{X} \times \mathbb{X}\rightarrow  \mathbb{X} \times \mathbb{X},$
 will be called {\em entwining (non-constant) reverse-pentagon map} if they satisfy the {\em entwining (non-constant) reverse-pentagon equation}
\begin{align*}
\bar S^{(5)}_{23}\circ \bar S^{(3)}_{13}\circ \bar S^{(1)}_{12}= \bar S^{(2)}_{12}\circ \bar S^{(4)}_{23}.
\end{align*}
The maps $T^{(q)}: \mathbb{X} \times \mathbb{X}\times \mathbb{X}\rightarrow  \mathbb{X} \times \mathbb{X}\times \mathbb{X},$
 will be called {\em entwining (non-constant) tetrahedron maps}  if they satisfy the {\em entwining (non-constant) tetrahedron equation} \cite{Zamolodchikov:1980}
\begin{align*}
T^{(1)}_{123}\circ T^{(2)}_{145}\circ T^{(3)}_{246}\circ T^{(4)}_{356}=T^{(4)}_{356}\circ T^{(3)}_{246}\circ T^{(2)}_{145}\circ T^{(1)}_{123}.
\end{align*}
Alternatively, tetrahedron maps are called 3-simplex or  Zamolodchikov maps.\\
The maps $P^{(q)}: \mathbb{X}^4 \rightarrow  \mathbb{X}^4 ,$ where $\mathbb{X}^4=\mathbb{X} \times \mathbb{X}\times \mathbb{X}\times \mathbb{X}$
 will be called {\em entwining (non-constant) 4-simplex maps}  if they satisfy the {\em entwining (non-constant) 4-simplex equation} \cite{Bazh_Stro:1982}
\begin{align*}
P^{(1)}_{0123}\circ P^{(2)}_{0456}\circ P^{(3)}_{1478}\circ P^{(4)}_{2579}\circ P^{(5)}_{3689}=P^{(5)}_{3689}\circ P^{(4)}_{2579}\circ P^{(3)}_{1478}\circ P^{(2)}_{0456}\circ P^{(1)}_{0123}.
\end{align*}
Alternatively, 4-simplex maps are called pentachoron (=5-cell) or Bazhanov-Stroganov maps.
\end{definition}

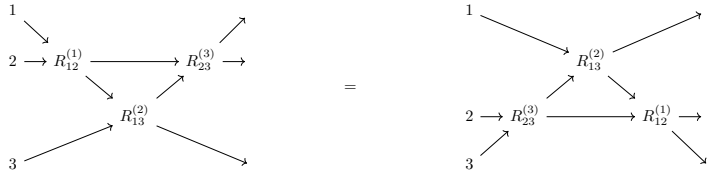
\begin{figure}[h]  \adjustbox{scale=0.6,center}{
\begin{minipage}[htb]{0.2\textwidth}
\begin{tikzcd}[row sep=0.5cm, column sep = 0.5cm,every arrow/.append style={dash}]
 1 \arrow[rightarrow,from=1-1,to=2-2]& {} & {} & {} &{} \\
 2 \arrow[rightarrow,from=2-1,to=2-2] & R_{12}^{(1)}\arrow[rightarrow,from=2-2,to=2-4] \arrow[rightarrow,from=2-2,to=3-3]  & {}& R_{23}^{(3)} \arrow[rightarrow,from=2-4,to=2-5] \arrow[rightarrow,from=2-4,to=1-5]& {} \\
  {} & {} & R_{13}^{(2)} \arrow[rightarrow,from=3-3,to=4-5] \arrow[rightarrow,from=3-3,to=2-4] &{}  & {} \\
 3 \arrow[rightarrow,from=4-1,to=3-3] &{} &{} &{} &{}
\end{tikzcd}
\end{minipage} \hspace{4cm} = \hspace{2cm}
\begin{minipage}[htb]{0.2\textwidth}
\begin{tikzcd}[row sep=0.5cm, column sep = 0.5cm,every arrow/.append style={dash}]
 1 \arrow[rightarrow,from=1-1,to=2-3]& {} & {} & {} &{} \\
  {} & {} & R_{13}^{(2)} \arrow[rightarrow,from=2-3,to=1-5] \arrow[rightarrow,from=2-3,to=3-4] &{}  & {} \\
   2 \arrow[rightarrow,from=3-1,to=3-2] & R_{23}^{(3)}\arrow[rightarrow,from=3-2,to=2-3] \arrow[rightarrow,from=3-2,to=3-4] & {} & R_{12}^{(1)} \arrow[rightarrow,from=3-4,to=3-5] \arrow[rightarrow,from=3-4,to=4-5] &{} \\
   3 \arrow[rightarrow,from=4-1,to=3-2] &{} &{} &{} &{}
\end{tikzcd}
\end{minipage}}
\caption{The Yang-Baxter relation} \label{fig1}
\end{figure}

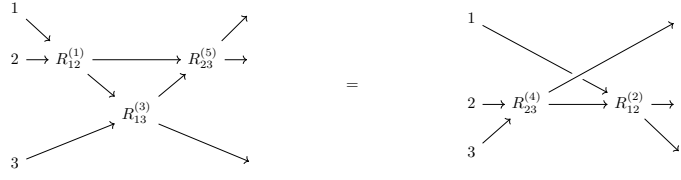
\begin{figure}[h]\adjustbox{scale=0.6,center}{
\begin{minipage}[htb]{0.2\textwidth}
\begin{tikzcd}[row sep=0.5cm, column sep = 0.5cm,every arrow/.append style={dash}]
 1 \arrow[rightarrow,from=1-1,to=2-2]& {} & {} & {} &{} \\
 2 \arrow[rightarrow,from=2-1,to=2-2] & R_{12}^{(1)}\arrow[rightarrow,from=2-2,to=2-4] \arrow[rightarrow,from=2-2,to=3-3]  & {}& R_{23}^{(5)} \arrow[rightarrow,from=2-4,to=2-5] \arrow[rightarrow,from=2-4,to=1-5]& {} \\
  {} & {} & R_{13}^{(3)} \arrow[rightarrow,from=3-3,to=4-5] \arrow[rightarrow,from=3-3,to=2-4] &{}  & {} \\
 3 \arrow[rightarrow,from=4-1,to=3-3] &{} &{} &{} &{}
\end{tikzcd}
\end{minipage} \hspace{4cm} = \hspace{2cm}
\begin{minipage}[htb]{0.2\textwidth}
\begin{tikzcd}[row sep=0.5cm, column sep = 0.5cm,every arrow/.append style={dash}]
 1 \arrow[rightarrow,from=1-1,to=3-4]& {} & {} & {} &{} \\
  {} & {} &  {}  &{}  & {} \\
   2 \arrow[rightarrow,from=3-1,to=3-2] & R_{23}^{(4)}\arrow[rightarrow, crossing over,from=3-2,to=1-5] \arrow[rightarrow,from=3-2,to=3-4] & {} & R_{12}^{(2)} \arrow[rightarrow,from=3-4,to=3-5] \arrow[rightarrow,from=3-4,to=4-5] &{} \\
   3 \arrow[rightarrow,from=4-1,to=3-2] &{} &{} &{} &{}
\end{tikzcd}
\end{minipage}}
\caption{The pentagon relation} \label{fig2}
\end{figure}
In the definitions above, when the maps $R^{(q)}$ do not differ so  we have
\begin{align*}
R^{(1)}=R^{(2)}=\ldots=R^{(q)}=R,
 \end{align*}
  we say that $R$ serves as a solution of the {\em constant} Yang-Baxter equation or just the Yang-Baxter equation and the map $R$ will be referred as {\em constant Yang-Baxter map} or simply Yang-Baxter map.  In a similar manner,  the constant pentagon, reverse-pentagon, tetrahedron and 4-simplex maps are defined. The term non-constant indicates the cases where  at least one of the maps $R^{(q)}$  differs. For  studies on entwining Yang-Baxter maps see \cite{Kouloukas:2011,Kassotakis:2019,Kels:2019II,Adamopoulou:2021}, at the same time for recent investigations on constant Yang-Baxter maps see \cite{Atkinson:2012i,Atkinson:2014,Mi+Pa+Wa:16,Kass2,Kels:2019II,Talalaev:2022,Caudrelier:2022}, whilst for non-commutative Yang-Baxter maps we refer to \cite{Doliwa_2014,Grahovski:2016,Kassotakis:2:2021,Kassotakis:2022b}.  Additionally, for recent developments on tetrahedron and 4-simplex maps we refer to \cite{Kass1,Doliwa:2020,Igonin:2022,Rizos:2022}, while for permutation type maps that serve as solutions to higher simplex equations see \cite{Hietarinta:1997}.

\begin{remark} \label{remark:1}
Let $\tau:\mathbb{X} \times \mathbb{X}\mapsto \mathbb{X} \times \mathbb{X}$ the transposition map i.e. $\tau: (x,y)\mapsto (y,x).$  For   pentagon and reverse-pentagon maps it is easy to verify the following:
\begin{itemize}
\item a map $S: \mathbb{X} \times \mathbb{X} \rightarrow \mathbb{X} \times \mathbb{X},$ is a pentagon map if and only if $\tau\circ S\circ \tau$ is a reverse-pentagon map;
\item an invertible map     $S: \mathbb{X} \times \mathbb{X} \rightarrow \mathbb{X} \times \mathbb{X},$ is a pentagon map if and only if $S^{-1}$ is a reverse-pentagon map;
\item an invertible map     $S: \mathbb{X} \times \mathbb{X} \rightarrow \mathbb{X} \times \mathbb{X},$ is a pentagon map if and only if    $\tau\circ S^{-1}\circ \tau$ is a pentagon map. The map $\tau\circ S^{-1}\circ \tau$ is referred to as the {\em opposite} pentagon map and it is denoted as $S^{op}.$  Similar statements hold for reverse-pentagon maps.
\end{itemize}
\end{remark}

\begin{prop} \label{boundary}
Let $S^{(q)}: \mathbb{X} \times \mathbb{X}\rightarrow \mathbb{X} \times \mathbb{X},$ $q\in\{1,\ldots,5\},$ be entwining (non-constant) pentagon maps and let be the maps $\chi,\psi:\mathbb{X}\rightarrow \mathbb{X}.$  Then the maps
\begin{align*}
Q^{(1)}:=(\chi\times id)\circ S^{(1)},&&Q^{(2)}:=(\chi\times id)\circ S^{(2)},&&Q^{(4)}:=S^{(4)}\circ (id\times \psi),&&Q^{(5)}:=S^{(5)}\circ (id\times \psi),
\end{align*}
satisfy the following entwining (non-constant) pentagon equation
\begin{align} \label{entw:00}
Q^{(1)}_{12}\circ S^{(3)}_{13}\circ Q^{(5)}_{23}=Q^{(4)}_{23}\circ Q^{(2)}_{12}.
\end{align}
Mappings $\chi$ and $\psi,$ will be called boundary maps.

A similar statement holds for the entwining (non-constant) reverse-pentagon maps
\begin{align*}
\bar S^{(q)}: \mathbb{X} \times \mathbb{X}\rightarrow \mathbb{X} \times \mathbb{X},&& q\in\{1,\ldots,5\}.
\end{align*}
\end{prop}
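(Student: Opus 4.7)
The plan is to unpack each $Q^{(q)}_{ij}$ as a composition on $\mathbb{X}\times\mathbb{X}\times\mathbb{X}$, then push the boundary factors $\chi$ and $\psi$ past the $S^{(q)}_{ij}$ using the fact that maps acting on disjoint tensor slots commute, and finally invoke the assumed entwining Pentagon relation for the $S^{(q)}$'s.

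Concretely, I will first rewrite the four ingredients as
\begin{align*}
Q^{(1)}_{12}&=(\chi\times id\times id)\circ S^{(1)}_{12}, & Q^{(2)}_{12}&=(\chi\times id\times id)\circ S^{(2)}_{12},\\
Q^{(4)}_{23}&=S^{(4)}_{23}\circ (id\times id\times\psi), & Q^{(5)}_{23}&=S^{(5)}_{23}\circ (id\times id\times\psi).
\end{align*}
Since $S^{(4)}_{23}$ touches only the $2$nd and $3$rd factors, it commutes with $(\chi\times id\times id)$; similarly $S^{(2)}_{12}$ commutes with $(id\times id\times\psi)$. Applying these two commutations in succession to the right-hand side $Q^{(4)}_{23}\circ Q^{(2)}_{12}$ turns it into $(\chi\times id\times id)\circ S^{(4)}_{23}\circ S^{(2)}_{12}\circ(id\times id\times\psi)$.

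At this point the entwining Pentagon relation for the $S^{(q)}$'s, namely $S^{(4)}_{23}\circ S^{(2)}_{12}=S^{(1)}_{12}\circ S^{(3)}_{13}\circ S^{(5)}_{23}$, substitutes the middle and reproduces exactly the expansion of the left-hand side $Q^{(1)}_{12}\circ S^{(3)}_{13}\circ Q^{(5)}_{23}$. The reverse-Pentagon version is handled by the same bookkeeping: the boundary factor $\chi\times id$ sits on the left of $\bar S^{(2)}$ (not $\bar S^{(1)}$) and $id\times\psi$ sits on the right of $\bar S^{(4)}$ (not $\bar S^{(5)}$), but since the commutations used depend only on disjointness of the index sets $\{1\}$, $\{3\}$ with $\{2,3\}$, $\{1,2\}$ respectively, the same manipulation yields the claim.

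There is essentially no obstacle: the whole proof is a diagrammatic identity about how maps with disjoint support commute, together with one substitution from the hypothesis. The only thing one must be careful about is the order of composition, in particular that the boundary maps are glued on the correct side of each $S^{(q)}$ so that the two commutations on the right-hand side leave $S^{(4)}_{23}\circ S^{(2)}_{12}$ sandwiched between $(\chi\times id\times id)$ on the outside and $(id\times id\times\psi)$ on the inside, matching exactly what is produced by expanding the left-hand side.
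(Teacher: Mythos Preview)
Your proof is correct and follows essentially the same approach as the paper: both expand the $Q^{(q)}_{ij}$'s, use that maps acting on disjoint factors commute (so $\chi_1$ can be pushed past $S^{(4)}_{23}$ and $\psi_3$ past $S^{(2)}_{12}$), and then invoke the Pentagon relation for the $S^{(q)}$'s. The only cosmetic difference is that the paper starts from the left-hand side and works toward the right, while you start from the right-hand side; the underlying manipulation is identical.
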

\begin{proof}
Since $S^{(q)}$ are non-constant pentagon maps, we have
\begin{align*}
Q^{(1)}_{12}\circ S^{(3)}_{13}\circ Q^{(5)}_{23}=\chi_1\circ S^{(1)}_{12}\circ S^{(3)}_{13}\circ S^{(5)}_{23}\circ \psi_3=
\chi_1\circ S^{(4)}_{23}\circ S^{(2)}_{12}\circ \psi_3.
\end{align*}
But there is
\begin{align*}
\chi_1\circ S^{(4)}_{23}\circ S^{(2)}_{12}\circ \psi_3=S^{(4)}_{23}\circ \psi_3\circ \chi_1\circ S^{(2)}_{12}=Q^{(4)}_{23}\circ Q^{(2)}_{12},
\end{align*}
and that proves that relation (\ref{entw:00}) holds.
\end{proof}
\begin{definition}[Ten-term relation \cite{Kashaev:1998}] \label{Def:ten-term}
The maps $W^q$ and $\bar W^q$ that map $\mathbb{X} \times \mathbb{X}$ to itself, will be said to satisfy the {\em non-constant or entwining ten-term} relation if they satisfy
\begin{align} \label{ten-term}
\bar W^{(1)}_{13}\circ W^{(2)}_{12}\circ \bar W^{(3)}_{14}\circ W^{(4)}_{34}\circ \bar W^{(5)}_{24}=
W^{(5)}_{34}\circ \bar W^{(4)}_{24}\circ W^{(3)}_{14}\circ \bar W^{(2)}_{13}\circ W^{(1)}_{12}.
\end{align}
\end{definition}
The ten-term relation in its entwining form (\ref{ten-term}) appeared in \cite{Dimakis_Korepanov:2021}.
\begin{prop}[\cite{Kashaev:1998}] \label{Def:Tetra-Penta}
Let $S^{(q)}$ respectively $\bar S^{(q)}$ be a pentagon respectively a reverse-pentagon map that satisfy the {\em ten-term} relation (\ref{ten-term}), then the maps
\begin{align*}
T^{(q)}_{123}:=\bar S^{(q)}_{13}\circ\tau_{23}\circ S^{(q)}_{13},
\end{align*}
define non-constant  tetrahedron maps, whereas the maps
\begin{align*}
P^{(q)}_{0123}:=\bar S^{(q)}_{13}\circ\tau_{01}\circ \tau_{23}\circ S^{(q)}_{13},
\end{align*}
define non-constant a 4-simplex maps, where $\tau:\mathbb{X} \times \mathbb{X}\rightarrow  \mathbb{X} \times \mathbb{X}$ the transposition map i.e. $\tau_{ij}$ stands for the transposition of the $i-$th and $j-$th arguments of $\mathbb{X}^N,$ $N\in\mathbb{N}$.
\end{prop}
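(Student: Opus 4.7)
The plan is to establish the Tetrahedron and Pentachoron relations by fully expanding both sides in terms of the defining building blocks $S^{(q)}$, $\bar S^{(q)}$ and transpositions, and then collapsing the resulting words using, in turn, the Pentagon relation, the reverse-Pentagon relation, the ten-term relation, and the trivial commutativity of maps supported on disjoint tensor factors.

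Concretely, for the Tetrahedron case I would substitute the definition $T^{(q)}_{ijk}=\bar S^{(q)}_{ik}\circ\tau_{jk}\circ S^{(q)}_{ik}$ into the left-hand side of the 3-simplex equation to obtain the word
\[
\bar S^{(1)}_{13}\,\tau_{23}\,S^{(1)}_{13}\;\bar S^{(2)}_{15}\,\tau_{45}\,S^{(2)}_{15}\;\bar S^{(3)}_{26}\,\tau_{46}\,S^{(3)}_{26}\;\bar S^{(4)}_{36}\,\tau_{56}\,S^{(4)}_{36}
\]
acting on $\mathbb{X}^{6}$. The first task is to transport each interior transposition $\tau_{jk}$ past the adjacent $S$- and $\bar S$-blocks, using the conjugation identity $\tau_{jk}\circ R_{ij}=R_{ik}\circ\tau_{jk}$ (valid whenever the indices are distinct), so that all $\bar S$-factors are pushed to the left, all $S$-factors to the right, and the transpositions are gathered in the middle as a fixed permutation of the six tensor slots. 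The same rearrangement is performed on the right-hand side.

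Once the two sides are in this normal form, the resulting $S$-word on the right, respectively $\bar S$-word on the left, can be further simplified by invoking the Pentagon relation from Definition~1.2 on the $S$-factors and the reverse-Pentagon relation on the $\bar S$-factors, wherever three indices overlap in the Pentagon pattern $\{ij,ik,jk\}$. The residual interlacings of $S$- and $\bar S$-factors that sit on four-index configurations of the shape appearing in (\ref{ten-term}) are exactly the place where the ten-term relation is applied; this is the step that converts the LHS normal form into the RHS normal form. Since every manipulation is reversible, equality of the two sides follows.

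For the Pentachoron statement the argument is identical: the extra transposition $\tau_{01}$ in $P^{(q)}_{0123}=\bar S^{(q)}_{13}\circ\tau_{01}\circ\tau_{23}\circ S^{(q)}_{13}$ only introduces additional factors that, after being conjugated through the $S$- and $\bar S$-blocks, amount to a universal permutation of the ten external indices $\{0,\dots,9\}$ appearing in the 4-simplex relation; the combinatorial skeleton of the proof, with its calls to Pentagon, reverse-Pentagon, and ten-term, is left unchanged.

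The main obstacle is not conceptual but bookkeeping: one must verify that after the transposition transport each application of (\ref{ten-term}), of the Pentagon relation, and of the reverse-Pentagon relation really occurs on the correct triple or quadruple of indices, and that the residual permutation is the same on both sides. A cleaner way to package this is to introduce, once and for all, a fixed ordering on the index set and to track each step as a permutation of that ordering composed with a local $S$- or $\bar S$-move; under this discipline the verification becomes a finite check whose length matches that of the standard proof of Kashaev's construction cited in Proposition~\ref{Def:Tetra-Penta}.
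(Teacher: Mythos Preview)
The paper does not supply a proof of this proposition at all: it is quoted as a result of Kashaev--Sergeev \cite{Kashaev:1998}, with no argument given in the body of the text. There is therefore nothing in the paper against which to compare your attempt; the only reference point is the original article you yourself cite at the end of your sketch.

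Your overall strategy---expand, conjugate transpositions through, then invoke the Pentagon, reverse-Pentagon and ten-term identities---is indeed that of \cite{Kashaev:1998} and is correct in outline. One point, however, should be corrected. After transporting the interior transpositions you will \emph{not} obtain a word with ``all $\bar S$-factors on the left, all $S$-factors on the right, and the transpositions gathered in the middle'': conjugation by $\tau_{jk}$ only relabels indices and never reorders an $S$-block relative to a $\bar S$-block, and adjacent blocks such as $S^{(1)}_{13}$ and $\bar S^{(2)}_{15}$ share the slot~$1$ and hence do not commute trivially. The word you actually reach is an interleaved $\bar S\,S\,\bar S\,S\cdots$ pattern (times a global permutation), and it is precisely to reorganise this interleaving that the Pentagon, reverse-Pentagon and ten-term relations must be applied in a specific sequence. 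You acknowledge ``residual interlacings'' a few lines later, so you may already sense this, but the intermediate normal form you describe does not exist. The ``bookkeeping'' you defer is therefore not a tidy finite check layered on top of a clean structure; it \emph{is} the proof, and if you intend to complete it you will need to write the index manipulations out explicitly at least for the Tetrahedron case.
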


Let $\mathbb{A}$ be an  associative algebra over a field $\mathbb{F},$  with multiplicative identity that we denote with $1.$
Throughout this paper we consider $\mathbb{X}=\underbrace{\mathbb{A}^\times\times\cdots \times \mathbb{A}^\times}_{N-\text{times}},$ $N\in \mathbb{N},$ where $\mathbb{A}^\times$ denotes the subgroup of elements $w\in \mathbb{A}$ having multiplicative inverse $w^{-1}\in \mathbb{A},$ s.t. $w w^{-1}=w^{-1} w=1.$
In this general setting, $\mathbb{A}^\times$ could be a division ring for instance bi-quaternions.  More generally, $\mathbb{A}^\times$ could stand for the subgroup of invertible matrices of the algebra $\mathbb{A}$ of $N\times N$  matrices.

\section{Matrix factorizations      pentagon and reverse-pentagon maps} \label{sec:3}

Let $I_{N}$ the order $N$ identity matrix and let $p_k,$ $k\in \{1,\ldots,K\},$
be a set of $k$ distinct positive integers $l_j$ i.e.
$$
p_k=\{l_1,\ldots, l_k\},\quad k\in \{1,\ldots,K\}.
$$
Let also $q_\lambda,$ $\lambda\in \{1,\ldots,K\},$ be  a set of $\lambda$ distinct positive integers $m_j$ i.e.
$$
q_\lambda=\{m_1,\ldots, m_\lambda\}.
$$
 Denote with $V_j(x),$ $j\in p_k$  the $N-$component row vectors
$$
V_j(x):=(x^{l_j,1},x^{l_j,2},\ldots,x^{l_j,N}),
$$
with $x^{l_j,1},x^{l_j,2},\ldots,x^{l_j,N}$ elements of a skew field.

We consider the matrices $A^{p_k}(x),$ which are defined by substituting the lines $l_1,\ldots, l_k$ of the identity matrix $I_{N}$ with the vectors $V_1(x),\ldots, V_k(x),$ respectively. Similarly we define the matrices  $A^{q_\lambda}(x).$  The matrices  $A^{p_k}(x)$ and $A^{q_\lambda}(x)$ participate in the following re-factorization problems
\begin{align} \label{refa_probs}
A^{p_k}(u)A^{q_\lambda}(v) = A^{q_\lambda}(y) A^{p_k}(x).
\end{align}

Unless otherwise stated, among the re-factorisation problems (\ref{refa_probs}) we concentrate to the ones that:
\begin{enumerate}
\item  $|p_k|=|q_{\lambda}|$ so $k=\lambda;$
\item $p_k\cap q_{\lambda}=\emptyset;$
\item $k=N/2,$ when $N$ is even, $k=(N-1)/2,$ when $N$ is odd.
\end{enumerate}
When the three conditions above are satisfied, without loss of generality we have
\begin{itemize}
\item  $p_k=p=\{1,\ldots, N/2\},$ $q_k=q=\{N/2+1,\ldots, N\},$ when $N$ is even,
\item $p_k=p=\{1,\ldots, \frac{N-1}{2}\},$ $q_k=q=\{\frac{N-1}{2}+1,\ldots, N-1\},$ when $N$ is odd.
\end{itemize}
Then we have the single  re-factorization problem
\begin{align} \label{refa_1}
A^{p}(u)A^{q}(v) = A^{q}(y) A^{p}(x),
\end{align}
that turns equivalent to the  map
$$
\phi^{p,q}:(x,y)\mapsto (u,v)= (u(x,y),v((x,y)).
$$
The following statements are in order.
\begin{enumerate}
\item The map  $\phi^{p,q}$ is a pentagon map.
\item  The inverse map $\left(\phi^{p,q}\right)^{-1},$ is equivalent to  $A^{q}(v)A^{p}(u) = A^{p}(x) A^{q}(y).
$
\item Mapping $\phi^{p,q}$ is birational.
\item For  the map $\phi^{q,p},$ there is
$
\phi^{q,p}=\tau\circ \left(\phi^{p,q}\right)^{-1}\circ \tau\equiv \left(\phi^{q,p}\right)^{op},
$
and it is equivalent to
\begin{align*}
A^{q}(u)A^{p}(v) = A^{p}(y) A^{q}(x).
\end{align*}
\item The re-factorization problem
\begin{align} \label{refa_2}
\left(A^{p}(u)\right)^{T}\left(A^{q}(v)\right)^{T} =\left(A^{q}(y)\right)^{T} \left(A^{p}(x)\right)^{T},
\end{align}
where the superscript $(T)$ stands for the transpose of a matrix, turns equivalent to  the reverse-pentagon   map
$$
\bar \phi^{p,q}:(x,y)\mapsto (u,v)= (u(x,y),v((x,y)).
$$
\end{enumerate}
The statements above can be easily proven. For instance let us sketch the proof of  statement $(1)$. For the proof we have to distinguish two cases, that is  when the order $N$ of the matrices   is even or odd. For even $N$ ($N=2k$), we have $p=\{1,\ldots,k\}$ and $q=\{k+1,\ldots,2k\}$ so
\begin{align*}
A^{\{1,\ldots,k\}}(x)=\left(\begin{array}{c|c}
 {\bf x}^{1}& {\bf x}^{2}   \\ \hline
{\bf 0}_k & {\bf I}_k
\end{array}\right),&& A^{\{k+1,\ldots,2k\}}(x)=\left(\begin{array}{c|c}
                           {\bf I}_k&{\bf 0}_k\\ \hline
                           {\bf x}^{1}& {\bf x}^{2}
                           \end{array}\right),
\end{align*}
where ${\bf 0}_k$ the order $k$ zero matrix,  ${\bf I}_k$ the order $k$ identity matrix and ${\bf x}^{i},$ $i=1,2,$ stands for  $k\times k$ block matrices. Then it is clear that the re-factorization problem
\begin{align*}
A^{\{1,\ldots,k\}}(u)A^{\{k+1,\ldots,2k\}}(v) = A^{\{k+1,\ldots,2k\}}(y) A^{\{1,\ldots,k\}}(x),
\end{align*}
 coincides with the $N=2$ re-factorization  problem
\begin{align} \label{refa:000}
A^{\{1\}}(u)A^{\{2\}}(v)=A^{\{2\}}(y)A^{\{1\}}(x),
 \end{align}
 when we consider the variables  $x^{1,j},y^{1,j},u^{1,j},v^{1,j},$ $j=1,2,$ that participate in (\ref{refa:000})   to be $k\times k$ matrices with entries in $A^{\times}.$ So what is valid for the order $N=2$ re-factorization problem (\ref{refa:000}), also holds for the re-factorization problem of order $N=2k.$ In the following Section we prove that (\ref{refa:000}) is equivalent to a pentagon map and that together with a similar analysis on odd-order re-factorization problems of the form (\ref{refa_1}), suffices as a proof of statement $(1)$.

Note that a consistency result such as the {\em 3-factorization property} \cite{Veselov:20031,koul-2009} that applies to matrices associated with Yang-Baxter maps, is not developed yet for the matrix factorizations associated with pentagon maps that we proposed above. Since  the {\em local $(N-1)-$gon equation} determines an $N-$gon map   \cite{Dimakis:2015,Hoissen2023}, by re-formulating the matrix factorizations presented above as  {\em local tetragon (2-gon) equations} we expect the anticipated consistency to follow.

\subsection{$N=2$ and the {\em  homogeneous normalization map}}
For $N=2,$ there is $p=\{1\}$ and  $q=\{2\},$ so we have the matrices
$$
A^{\{1\}}(x)=\begin{pmatrix}
 x^{1,1}& x^{1,2}   \\
0 & 1
\end{pmatrix}, \quad
A^{\{2\}}(x)= \begin{pmatrix}
 1&0\\
 x^{2,1}&  x^{2,2}
\end{pmatrix},
$$
and the re-factorization problem
$$
A^{\{1\}}(u)A^{\{2\}}(v)=A^{\{2\}}(y)A^{\{1\}}(x),
$$
that reads:
\begin{align} \label{refa0}
\begin{pmatrix}
u^{1,1}+u^{1,2}v^{2,1}&u^{1,2}v^{2,2}\\
v^{2,1}& v^{2,2}
\end{pmatrix}  =
\begin{pmatrix}
x^{1,1}&x^{1,2}\\
y^{2,1}x^{1,1}& y^{2,2}+y^{2,1}x^{1,2}
\end{pmatrix}.
\end{align}
 For simplicity, we denote $x^{(i)}:=x^{1,i},$ $u^{(i)}:=u^{1,i},$  $y^{(i)}:=y^{2,i}$ $v^{(i)}:=v^{2,i},$ $i=1,2$
and (\ref{refa0}) defines    uniquely the map $\phi^{\{1\},\{2\}}$ which we will denote for simplicity as $\phi.$
 The map $\phi,$ coincides (when all multiplications are taken in opposite order) with  the  {\em normalization map} that was introduced in \cite{Doliwa:2020}.  In detail, mapping $\phi,$ explicitly reads:
\begin{align} \label{norma:m00}
 \phi:\left(x^{(1)},x^{(2)};y^{(1)},y^{(2)}\right)\mapsto \left(u^{(1)},u^{(2)};v^{(1)},v^{(2)}\right),
\end{align}
 where:
\begin{align} \label{norma:m}
u^{(1)}=x^{(1)}-x^{(2)}\left(y^{(2)}+y^{(1)}x^{(2)}\right)^{-1}y^{(1)}x^{(1)},&&u^{(2)}=x^{(2)}\left(y^{(2)}+y^{(1)}x^{(2)}\right)^{-1},\\
v^{(1)}=y^{(1)}x^{(1)},&& v^{(2)}=y^{(2)}+y^{(1)}x^{(2)}.
\end{align}
Note that, mapping $\phi,$ admits the reduction $x^{(1)}+x^{(2)}=1,$  $y^{(1)}+y^{(2)}=1,$ and the reduced map coincides with the pentagon map obtained in  \cite{Kashaev:1998}.

\begin{prop} \label{prop2:1}
The map $\phi$ is a pentagon map.
\end{prop}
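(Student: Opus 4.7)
The plan is to prove the Pentagon identity $\phi_{12}\circ\phi_{13}\circ\phi_{23}=\phi_{23}\circ\phi_{12}$ by direct component-wise computation, which is the route announced in the preamble to Section~\ref{sec:3}. Writing $(X_1,X_2,X_3)\in\mathbb{X}^3$ with $X_i=(x_i^{(1)},x_i^{(2)})$, I would track the evolution of the six coordinates along both sides of the equation. On the right-hand side, apply $\phi_{12}$ using formulas (\ref{norma:m}) to produce $(U_1,U_2,X_3)$, then apply $\phi_{23}$ to the pair $(U_2,X_3)$ to obtain $(U_1,V_2,V_3)$. On the left-hand side, apply $\phi_{23}$ first, then $\phi_{13}$, and finally $\phi_{12}$, recording all intermediate outputs. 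Matching the resulting triples coordinate by coordinate yields six identities that I expect to collapse to a common normal form after elementary rearrangement.

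The principal obstacle will be the non-commutativity of the underlying algebra $\mathbb{A}^\times$: all products must be preserved in order, and inverses of composite expressions such as $(y^{(2)}+y^{(1)}x^{(2)})^{-1}$ cannot be distributed naively. To keep the bookkeeping tractable I would abbreviate recurring subexpressions, e.g.\ set $\Omega(x,y):=y^{(2)}+y^{(1)}x^{(2)}$ so that $v^{(2)}=\Omega(x,y)$ and $u^{(2)}=x^{(2)}\Omega(x,y)^{-1}$, and then manipulate using only the elementary identities $(AB)^{-1}=B^{-1}A^{-1}$ and the Hua-type rearrangement $A-A(A+B)^{-1}A=B(A+B)^{-1}A$, both valid over any skew field. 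The $v$-components satisfy purely additive/multiplicative identities and should match easily; the substance of the computation will lie in checking the $u$-components, where nested inverses accumulate.

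A conceptually cleaner alternative, flagged in the paper's introduction, would be to verify the local tetragon ($2$-gon) equation for the matrices $A^{\{1\}},A^{\{2\}}$ in an enlarged block setting, from which the Pentagonal property follows on general grounds. Concretely, this would mean embedding the re-factorization (\ref{refa:000}) into a $3\times 3$ block matrix identity and extracting Pentagon from the associativity of matrix multiplication combined with uniqueness of the $(A^{p},A^{q})$-factorization. I would fall back on this route only if the coordinate computation proves intractable, since it still requires formulating and verifying the enlarged factorization identity.
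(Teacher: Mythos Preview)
Your proposal is correct and follows exactly the route the paper takes: the proof of Proposition~\ref{prop2:1} is by direct component-wise verification of $\phi_{12}\circ\phi_{13}\circ\phi_{23}=\phi_{23}\circ\phi_{12}$, with the paper deferring the explicit bookkeeping to the proof of Proposition~\ref{prop3:2} (the $N=3$ case contains the $N=2$ computation as its first-two-components sub-block). Your abbreviation $\Omega(x,y)=y^{(2)}+y^{(1)}x^{(2)}$ and use of the Hua-type identity are exactly the sort of devices needed for the $u$-components, and your fallback via the local tetragon equation is precisely the alternative the paper flags in the introduction but does not carry out.
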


\begin{proof}
By direct computation  (see  the proof of Proposition \ref{prop3:2}, or  \cite{Doliwa:2014p,Doliwa:2020}) it can be  shown that  the map $\phi$ is a pentagon map i.e. it satisfies
$$
\phi_{12}\circ \phi_{13}\circ \phi_{23}=\phi_{23}\circ \phi_{12},
$$
where
$$
\phi_{12}=\phi\times id,\quad \phi_{23}=id\times \phi,\quad \phi_{13}=(id\times \tau)\circ \phi_{12}\circ (id\times \tau),
$$
and $\tau$ the transposition map i.e. $\tau: (x,y)\mapsto (y,x)$
\end{proof}

\subsection{$N=3,$  an extension of the homogeneous normalization map and non-constant Yang-Baxter and pentagon maps}

For $N=3$ we have the sets  $p=\{1\}$ and $q=\{2\}$ that correspond to the matrices
\begin{align*}
A^{\{1\}}(x)=\begin{pmatrix}
 x^{1,1}& x^{1,2}&x^{1,3}   \\
0 & 1 &0 \\
0&0&1
\end{pmatrix}, && A^{\{2\}}(x)=\begin{pmatrix}
1&0&0   \\
x^{2,1}& x^{2,2}&x^{2,3} \\
0&0&1
\end{pmatrix},
\end{align*}
Then the re-factorization problem
$$
A^{\{1\}}(u)A^{\{2\}}(v)=A^{\{2\}}(y)A^{\{1\}}(x),
$$
reads:
\begin{align} \label{refa1}
\begin{pmatrix}
u^{1,1}+u^{1,2}v^{2,1}&u^{1,2}v^{2,2}&u^{1,2}v^{2,3}+u^{1,3}\\
v^{2,1}& v^{2,2}&v^{2,3}\\
0&0&1
\end{pmatrix}  =
\begin{pmatrix}
x^{1,1}&x^{1,2}&x^{1,3}\\
y^{2,1}x^{1,1}& y^{2,2}+y^{2,1}x^{1,2}&y^{2,3}+y^{2,1}x^{1,3}\\
0&0&1
\end{pmatrix}.
\end{align}
We denote for simplicity $x^{(i)}:=x^{1,i},$ $u^{(i)}:=u^{1,i},$ $y^{(i)}:=y^{2,i}$ $v^{(i)}:=v^{2,i},$ $i=1,2,3$
and (\ref{refa1}) defines    uniquely the map
\begin{align} \label{New_Pent}
\phi^{\{1\},\{2\}}: \left(x^{(1)},x^{(2)},x^{(3)};y^{(1)},y^{(2)},y^{(3)}\right)\mapsto \left(u^{(1)},u^{(2)},u^{(3)};v^{(1)},v^{(2)},v^{(3)}\right),
\end{align}
where
\begin{align*}
u^{(1)}=x^{(1)}-x^{(2)}\left(y^{(2)}+y^{(1)}x^{(2)}\right)^{-1}y^{(1)}x^{(1)},&&u^{(2)}=x^{(2)}\left(y^{(2)}+y^{(1)}x^{(2)}\right)^{-1},\\
v^{(1)}=y^{(1)}x^{(1)},&& v^{(2)}=y^{(2)}+y^{(1)}x^{(2)},\\
u^{(3)}=x^{(3)}-x^{(2)}\left(y^{(2)}+y^{(1)}x^{(2)}\right)^{-1}\left(y^{(3)}+y^{(1)}x^{(3)}\right),&& v^{(3)}=y^{(3)}+y^{(1)}x^{(3)}.
\end{align*}

\begin{prop} \label{prop3:2}
The map $\phi^{\{1\},\{2\}}$ is a constant pentagon map.
\end{prop}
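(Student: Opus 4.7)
The plan is to verify the constant Pentagon identity $\phi_{12}\circ \phi_{13}\circ \phi_{23}=\phi_{23}\circ \phi_{12}$ (with $\phi:=\phi^{\{1\},\{2\}}$) by componentwise computation, organized so that the bulk of the algebra is absorbed by Proposition \ref{prop2:1}. Before grinding anything out, I would separate the six output letters of $\phi$ into two blocks: the block $(u^{(1)},u^{(2)},v^{(1)},v^{(2)})$, which is ``old'', and the new block $(u^{(3)},v^{(3)})$.

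The structural observation that drives the proof is that the four old output letters depend only on the four old input letters $(x^{(1)},x^{(2)},y^{(1)},y^{(2)})$ and reproduce, line for line, the $N=2$ homogeneous normalization map of Proposition \ref{prop2:1}. Writing $\pi:\mathbb{X}\to\mathbb{X}'=\mathbb{A}^\times\times\mathbb{A}^\times$ for the projection onto the first two letters, this yields the intertwining $(\pi\times\pi)\circ \phi^{\{1\},\{2\}}=\phi^{N=2}\circ(\pi\times\pi)$. Extending this componentwise to $\mathbb{X}^3$, one sees that the Pentagon identity for $\phi^{\{1\},\{2\}}$ restricted to the old letters on each of the three factors is just the Pentagon identity for $\phi^{N=2}$ (on three copies of $\mathbb{X}'$), hence follows at once from Proposition \ref{prop2:1}.

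What remains is to verify the Pentagon identity on the new letters $u^{(3)},v^{(3)}$. I would rewrite them in the compact form
\be
v^{(3)} = y^{(3)} + y^{(1)} x^{(3)}, \qquad u^{(3)} = x^{(3)} - x^{(2)}\bigl(v^{(2)}\bigr)^{-1} v^{(3)},
\ee
and notice that $v^{(3)}$ has the same functional shape as $v^{(2)}$ with the third letters replacing the second, while $u^{(3)}$ reuses the denominator $v^{(2)}=y^{(2)}+y^{(1)}x^{(2)}$ already produced by the old block. This strict parallelism means that when both sides of the Pentagon identity are expanded in the $(3)$-components, the manipulations needed are formally identical to those used in the $N=2$ proof, now run with $(x^{(3)},y^{(3)})$ in place of $(x^{(2)},y^{(2)})$ and with the shift identities $v^{(2)}=y^{(2)}+y^{(1)}x^{(2)}$ and $v^{(3)}=y^{(3)}+y^{(1)}x^{(3)}$ read off directly from (\ref{refa1}).

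The main obstacle is non-commutative bookkeeping on the left-hand side: the triple composition $\phi_{12}\circ \phi_{13}\circ\phi_{23}$ produces nested expressions involving inverses of sums such as $(y^{(2)}+y^{(1)}x^{(2)})^{-1}$ composed through different factors, and showing that these collapse to the simpler two-step right-hand side requires careful tracking of how the $(v^{(2)})^{-1}$ factor in $u^{(3)}$ updates under the old block's evolution. A cleaner alternative, foreshadowed in the paper's discussion of the local tetragon equation \cite{Dimakis:2015}, would be to package the Pentagon property as a matrix identity in an enlarged tensor setting for $A^{\{1\}}(x), A^{\{2\}}(y)$, thereby bypassing the explicit three-fold composition; but for the proof-by-computation route the reduction to Proposition \ref{prop2:1} plus the parallel treatment of the $(3)$-components is the most economical way to proceed.
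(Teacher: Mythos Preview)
Your plan is correct in outline and takes a genuinely different, more structured route than the paper. The paper's proof is a brute-force componentwise verification: it tracks each of the nine variables $x^{(i)},y^{(i)},z^{(i)}$ through both sides of $\phi_{12}\circ\phi_{13}\circ\phi_{23}=\phi_{23}\circ\phi_{12}$ and records the common final expressions $(\widehat x,\widehat y,\widehat z)$. Your organizing observation---that the $(1,2)$-block of $\phi^{\{1\},\{2\}}$ is closed and coincides with the $N=2$ normalization map, so that the intertwining $(\pi\times\pi)\circ\phi=\phi^{N=2}\circ(\pi\times\pi)$ lets Proposition~\ref{prop2:1} dispose of twelve of the eighteen component checks at once---is correct and more economical. (The paper in fact remarks on this triangular structure immediately after its proof, but does not exploit it inside the argument.)

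Where your proposal is loose is in the treatment of the $(3)$-components. The claimed parallelism ``run the $N=2$ proof with $(x^{(3)},y^{(3)})$ in place of $(x^{(2)},y^{(2)})$'' is not literally accurate: while $v^{(3)}=y^{(3)}+y^{(1)}x^{(3)}$ does mirror $v^{(2)}$, the formula $u^{(3)}=x^{(3)}-u^{(2)}v^{(3)}$ mirrors $u^{(1)}=x^{(1)}-u^{(2)}v^{(1)}$, not $u^{(2)}$. So the $(3)$-slot mixes the roles of indices $1$ and $2$ from the $N=2$ map, and there is no single substitution that reduces the $(3)$-check to something already done. The three remaining equalities for $\widehat x^{(3)},\widehat y^{(3)},\widehat z^{(3)}$ still have to be verified by hand (the $\widehat z^{(3)}$ case is indeed formally identical to $\widehat z^{(2)}$; the $\widehat x^{(3)}$ and $\widehat y^{(3)}$ cases are not, as the paper's final expressions make clear). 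You correctly flag this as ``the main obstacle'', but you should not expect it to dissolve by analogy---it is a short direct computation, just one that must actually be carried out.
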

\begin{proof}
The proof that the map $\phi^{\{1\},\{2\}}$ is a  pentagon map follows by direct computations. Indeed, the left hand side of the constant pentagon equation
 \begin{align}  \label{pe_rela}
 \phi^{\{1\},\{2\}}_{12}\circ \phi^{\{1\},\{2\}}_{13}\circ \phi^{\{1\},\{2\}}_{23}=\phi^{\{1\},\{2\}}_{23}\circ \phi^{\{1\},\{2\}}_{12}
 \end{align}
 applied on $z^{(i)},$ $i=1,2,3,$ respectively gives
 \begin{align*}
z^{(1)}\xmapsto[]{\phi^{\{1\},\{2\}}_{12}}  z^{(1)}  \xmapsto[]{\phi^{\{1\},\{2\}}_{13}}  z^{(1)}x^{(1)}   \xmapsto[]{\phi^{\{1\},\{2\}}_{23}} z^{(1)}y^{(1)}x^{(1)} =\widehat z^{(1)},&&\\
z^{(2)}\xmapsto[]{\phi^{\{1\},\{2\}}_{12}}  z^{(2)}  \xmapsto[]{\phi^{\{1\},\{2\}}_{13}}  z^{(2)}+z^{(1)}x^{(2)}   \xmapsto[]{\phi^{\{1\},\{2\}}_{23}} z^{(2)}+z^{(1)}\left(y^{(2)}+y^{(1)}x^{(2)}\right) =\widehat z^{(2)},&&\\
z^{(3)}\xmapsto[]{\phi^{\{1\},\{2\}}_{12}}  z^{(3)}  \xmapsto[]{\phi^{\{1\},\{2\}}_{13}}  z^{(3)}+z^{(1)}x^{(3)}   \xmapsto[]{\phi^{\{1\},\{2\}}_{23}} z^{(3)}+z^{(1)}\left(y^{(3)}+y^{(1)}x^{(3)}\right) =\widehat z^{(2)},&&
\end{align*}
 which coincides with  the application of the right hand side of (\ref{pe_rela}) to the same components since
  \begin{align*}
z^{(1)}\xmapsto[]{\phi^{\{1\},\{2\}}_{23}} z^{(1)}y^{(1)} \xmapsto[]{\phi^{\{1\},\{2\}}_{12}}  z^{(1)}y^{(1)}x^{(1)} =\widehat z^{(1)},&&\\
z^{(2)}\xmapsto[]{\phi^{\{1\},\{2\}}_{23}} z^{(2)}+z^{(1)}y^{(2)} \xmapsto[]{\phi^{\{1\},\{2\}}_{12}}  z^{(2)}+z^{(1)}\left(y^{(2)}+y^{(1)}x^{(2)}\right) =\widehat z^{(2)},&&\\
z^{(3)}\xmapsto[]{\phi^{\{1\},\{2\}}_{23}} z^{(3)}+z^{(1)}y^{(3)} \xmapsto[]{\phi^{\{1\},\{2\}}_{12}}  z^{(3)}+z^{(1)}\left(y^{(3)}+y^{(1)}x^{(3)}\right) =\widehat z^{(3)}.&&
\end{align*}
 Similarly we can prove that the left hand side of (\ref{pe_rela}) applied on  $x^{(i)}, y^{(i)},$ $i=1,2,3,$ agrees with the application of the right hand side of (\ref{pe_rela}) to the same components.

To recapitulate, we provide the final expressions
\begin{align*}
(\widehat x, \widehat y, \widehat z)=\left\{\begin{aligned}
                      \phi^{\{1\},\{2\}}_{12}\circ \phi^{\{1\},\{2\}}_{13}\circ \phi^{\{1\},\{2\}}_{23}(x,y,z),\\
                      \phi^{\{1\},\{2\}}_{23}\circ \phi^{\{1\},\{2\}}_{12}(x,y,z),
                    \end{aligned}\right.
\end{align*}
where
\begin{align*}
\widehat x^{(1)}=x^{(2)}\left(y^{(2)}+y^{(1)}x^{(2)}\right)^{-1}y^{(2)}\left(x^{(2)}\right)^{-1}x^{(1)},&&\widehat z^{(1)}= z^{(1)}y^{(1)}x^{(1)},&&\\\\
\widehat x^{(2)}=x^{(2)}\left(y^{(2)}+y^{(1)}x^{(2)}\right)^{-1},&&\widehat z^{(2)}=z^{(2)}+z^{(1)}\left(y^{(2)}+y^{(1)}x^{(2)}\right),&&\\\\
\widehat x^{(3)}=x^{(2)}\left(y^{(2)}+y^{(1)}x^{(2)}\right)^{-1}\left(y^{(2)}\left(x^{(2)}\right)^{-1}x^{(3)}-y^{(3)}\right),&&\widehat z^{(3)}=z^{(3)}+z^{(1)}\left(y^{(3)}+y^{(1)}x^{(3)}\right),&&
\end{align*}
\begin{align*}
\widehat y^{(1)}= y^{(1)}x^{(1)}-\left(y^{(2)}+y^{(1)}x^{(2)}\right)\gamma^{-1}z^{(1)}y^{(1)}x^{(1)},&&\\
\widehat y^{(2)}=\left(y^{(2)}+y^{(1)}x^{(2)}\right)\gamma^{-1},&&\\
\widehat y^{(3)}=y^{(3)}+y^{(1)}x^{(3)}-\left(y^{(2)}+y^{(1)}x^{(2)}\right)\gamma^{-1}\delta,&&
\end{align*}
with $\gamma:=z^{(2)}+z^{(1)}\left(y^{(2)}+y^{(1)}x^{(2)}\right)$ and $\delta:=z^{(3)}+z^{(1)}\left(y^{(3)}+y^{(1)}x^{(3)}\right).$
\end{proof}
Note that the mapping $\phi^{\{1\},\{2\}}$ is of triangular form, with its components $x^{(j)},y^{(j)},$  $j=~1,2,$ varying as in the $N=2$ case, while the non-trivial coupling appears  in the $x^{(3)}$ and $y^{(3)}$ components. The associated  map with the components $x^{(j)},y^{(j)},$  $j=1,2,$ coincides with the normalization map (\ref{norma:m00}).



Furthermore, each one of the maps $\phi^{\{1\},\{2\}}, \phi^{\{1\},\{3\}}, \phi^{\{2\},\{3\}}$ is a constant pentagon map. These three pentagon maps  are related since it holds:
$$
\sigma_{23}\circ \phi^{\{1\},\{2\}}\circ (\sigma_{23})^{-1}=\phi^{\{1\},\{3\}},  \sigma_{12}\circ \phi^{\{1\},\{3\}}\circ (\sigma_{12})^{-1}=\phi^{\{2\},\{3\}},
$$
where
\begin{align*}
\sigma_{23}: (x^1,x^2,x^3;y^1,y^2,y^3)\mapsto (x^1,x^3,x^2;y^1,y^3,y^2),\\
 \sigma_{12}: (x^1,x^2,x^3;y^1,y^2,y^3)\mapsto (x^2,x^1,x^3;y^2,y^1,y^3).
\end{align*}

\begin{remark} \label{rem22}
Through direct computation, it can be shown that the inverse map   $\left(\phi^{\{1\},\{2\}}\right)^{-1}$ together with $\phi^{\{1\},\{2\}},$ do satisfy the constant ten-term relation (see Definition \ref{Def:ten-term}). Therefore, by following Proposition \ref{Def:Tetra-Penta},  the map $T:=\left(\phi^{\{1\},\{2\}}_{13}\right)^{-1}\circ \tau_{23}\circ \phi^{\{1\},\{2\}}_{13}$ is a tetrahedron map, whereas the map $P:=\left(\phi^{\{1\},\{2\}}_{13}\right)^{-1}\circ \tau_{01}\circ \tau_{23}\circ \phi^{\{1\},\{2\}}_{13}$ is a 4-simplex map. For their explicit form see Appendix \ref{app1}. For recent developments on 4-simplex maps, see \cite{Rizos:2022c}.
\end{remark}

 \begin{prop}
  The maps $\phi^{\{1\},\{2\}}, \phi^{\{1\},\{3\}},$ and $\phi^{\{2\},\{3\}}$ satisfy the following entwining (non-constant) Yang-Baxter equation
\begin{align}\label{entw_YB}
\phi^{\{1\},\{2\}}_{12}\circ \phi^{\{1\},\{3\}}_{13}\circ \phi^{\{2\},\{3\}}_{23}= \phi^{\{2\},\{3\}}_{23}\circ \phi^{\{1\},\{3\}}_{13}\circ \phi^{\{1\},\{2\}}_{12}.
\end{align}
\end{prop}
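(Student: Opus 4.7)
The proof strategy is to exploit the matrix re-factorisation origin of each of the three maps appearing in (\ref{entw_YB}). Recall that $\phi^{\{i\},\{j\}}$ is \emph{defined} by the equation $A^{\{i\}}(u)A^{\{j\}}(v)=A^{\{j\}}(y)A^{\{i\}}(x)$: whenever a product of $A$-matrices contains an adjacent block $A^{\{j\}}A^{\{i\}}$, one may replace it by a block $A^{\{i\}}A^{\{j\}}$ with parameters transformed by $\phi^{\{i\},\{j\}}$, and conversely. I would therefore apply each side of (\ref{entw_YB}) to the triple $(x_1,x_2,x_3)\in\mathbb{X}^3$ while simultaneously tracking the ``physical'' matrix product
\[
M \;:=\; A^{\{3\}}(x_3)\,A^{\{2\}}(x_2)\,A^{\{1\}}(x_1),
\]
using the convention that the $i$-th slot of the triple parametrises the matrix $A^{\{i\}}$.

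Tracing the left-hand side of (\ref{entw_YB}), the three successive swaps rearrange the superscripts of $M$ through
$\{3\}\{2\}\{1\}\to\{2\}\{3\}\{1\}\to\{2\}\{1\}\{3\}\to\{1\}\{2\}\{3\}$, and at the end one obtains a factorisation $M = A^{\{1\}}(a)A^{\{2\}}(b)A^{\{3\}}(c)$, where $(a,b,c)$ is precisely the triple produced by the LHS of (\ref{entw_YB}) acting on $(x_1,x_2,x_3)$. The right-hand side performs the reordering in the alternative sequence $\{3\}\{2\}\{1\}\to\{3\}\{1\}\{2\}\to\{1\}\{3\}\{2\}\to\{1\}\{2\}\{3\}$, producing a second factorisation $M = A^{\{1\}}(a')A^{\{2\}}(b')A^{\{3\}}(c')$. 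Associativity of matrix multiplication guarantees that both procedures indeed produce factorisations of the \emph{same} matrix $M$.

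The entwining Yang--Baxter relation then reduces to the uniqueness of the ordered factorisation $M = A^{\{1\}}(a)A^{\{2\}}(b)A^{\{3\}}(c)$, which is the key step. This is an elementary row-by-row calculation analogous to the $N=2$ uniqueness used implicitly in Proposition \ref{prop2:1}: the third row of $M$ reads off $c$ directly; the second row of $M$, after subtracting the $A^{\{3\}}$--contribution already computed, reads off $b$ (using invertibility of $c^{3,3}$); and the first row of $M$, treated analogously, reads off $a$. Hence $(a,b,c)=(a',b',c')$, and this equality is exactly the identity (\ref{entw_YB}) applied coordinate-wise on $(x_1,x_2,x_3)$.

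I expect the main technical obstacle to be the bookkeeping of invertibility conditions along the way: each invocation of $\phi^{\{i\},\{j\}}$ in the two chains of re-factorisations, as well as the uniqueness step above, requires certain block entries to lie in $\mathbb{A}^\times$, and one must check these generically hold simultaneously for all intermediate data. A less conceptual but still viable route would be a direct coordinate substitution in the spirit of the proof of Proposition \ref{prop3:2}, but the matrix-factorisation viewpoint above both avoids lengthy symbolic manipulation and makes transparent the structural reason why these three maps weave together into a Yang--Baxter solution.
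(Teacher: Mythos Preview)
Your argument is correct. The paper does not supply a proof of this proposition; in the surrounding text and in Section~\ref{sec:4} the author merely remarks that such entwining Yang--Baxter relations ``can be shown by direct computation'', in the same coordinate-by-coordinate style used for Propositions~\ref{prop3:2} and~\ref{fin_prop}. Your approach is genuinely different: you use the defining re-factorisation $A^{\{i\}}(u)A^{\{j\}}(v)=A^{\{j\}}(y)A^{\{i\}}(x)$ to identify each side of (\ref{entw_YB}) with one of the two ways of reordering the product $M=A^{\{3\}}(x_3)A^{\{2\}}(x_2)A^{\{1\}}(x_1)$ into the form $A^{\{1\}}A^{\{2\}}A^{\{3\}}$, and then you invoke uniqueness of that ordered factorisation. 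This is precisely the ``Lax-matrix'' mechanism of \cite{Veselov:2003b}, here applied with three distinct Lax matrices rather than one, and it makes transparent why the \emph{entwining} relation holds even though no single $\phi^{\{i\},\{j\}}$ is a Yang--Baxter map on its own. Compared with the paper's direct substitution, your route is shorter and conceptually cleaner; its only cost, which you correctly flag, is the need to verify the invertibility of the diagonal entries $c^{3,3}$ and $b^{2,2}$ (and likewise at each intermediate swap) so that the back-substitution recovering $(a,b,c)$ from $M$ is well posed over the skew field $\mathbb{A}^\times$. That check is routine on a Zariski-open set of initial data and is exactly the kind of generic-invertibility assumption already implicit in the paper's own formulas for $\phi^{\{1\},\{2\}}$.
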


Note that the maps that correspond to sets of different cardinality are not constant pentagon maps i.e. the map $\phi^{\{1,2\},\{3\}}$ is not a constant pentagon map.
Nevertheless they participate in the solution of an entwining pentagon relation as the following proposition suggests.
\begin{prop} \label{fin_prop}
The maps $\phi^{\{1\},\{2,3\}}, \phi^{\{1\},\{2\}},$ and $\phi^{\{1,3\},\{2\}}$
\begin{enumerate}
\item are equivalent to the respective re-factorization problems
\begin{align} \label{refa3:1}
A^{\{1\}}(u)A^{\{2,3\}}(v)=A^{\{2,3\}}(y)A^{\{1\}}(x),\\ \label{refa3:2}
A^{\{1\}}(u)A^{\{2\}}(v)=A^{\{2\}}(y)A^{\{1\}}(x),
\end{align}
and
\begin{align} \label{refa3:3}
A^{\{1,3\}}(u)A^{\{2\}}(v)=A^{\{2\}}(y)A^{\{1,3\}}(x);
\end{align}
\item they satisfy the following entwining (non-constant) pentagon equation
\begin{align}\label{entw_Pe}
\phi^{\{1\},\{2,3\}}_{12}\circ \phi^{\{1\},\{2\}}_{13}\circ \phi^{\{1,3\},\{2\}}_{23}=\phi^{\{1,3\},\{2\}}_{23}\circ \phi^{\{1\},\{2,3\}}_{12}.
\end{align}
\end{enumerate}
\end{prop}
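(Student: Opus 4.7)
The plan has two parts, corresponding to the two assertions of the proposition. For (1), each of the three maps is defined entrywise by its matrix re-factorization. Writing $A^{\{2,3\}}(y)A^{\{1\}}(x)$ in (\ref{refa3:1}) as a $3\times 3$ matrix, the bottom two rows immediately give the six components of $v$ as explicit expressions in $x$ and $y$ (of the same form as for $\phi^{\{1\},\{2\}}$, repeated once for row $2$ and once for row $3$), while the top row yields a non-commutative $2\times 2$ linear system for the pair $(u^{(2)},u^{(3)})$, solved using the inverse of the $2\times 2$ block built from the $v$-components, and finally $u^{(1)}$ follows linearly. The re-factorization (\ref{refa3:3}) is handled analogously, with an analogous $2\times 2$ inversion on the column side; (\ref{refa3:2}) simply reproduces the map already computed in (\ref{New_Pent}).

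For (2), I would follow the componentwise strategy used in the proof of Proposition \ref{prop3:2}: apply both sides of (\ref{entw_Pe}) to a generic triple $(x_1,x_2,x_3)\in\mathbb{A}^3\times\mathbb{A}^6\times\mathbb{A}^3$ and verify that the outputs agree slot by slot. The subtlety specific to the entwining case is that the same $6$-tuple in slot $2$ is interpreted as the parameters of an $A^{\{1,3\}}$ matrix by $\phi^{\{1,3\},\{2\}}_{23}$ and as those of an $A^{\{2,3\}}$ matrix by $\phi^{\{1\},\{2,3\}}_{12}$; these two matrices share their third row but differ in whether the remaining three parameters sit in row $1$ or row $2$. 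Tracking this re-interpretation through the chain of re-factorizations is the conceptual core of the argument.

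A more structural alternative, which I would attempt in parallel, is to lift (\ref{entw_Pe}) to a matrix identity via the easily verified semigroup rules $A^{\{1,3\}}\cdot A^{\{1\}}=A^{\{1,3\}}$ and $A^{\{2\}}\cdot A^{\{2,3\}}=A^{\{2,3\}}$ (together with the opposite-order analogues), obtained by direct $3\times 3$ multiplication. These reduce the triple product $A^{\{2\}}(x_3)\,M(x_2)\,A^{\{1\}}(x_1)$, with $M(x_2)$ the appropriate interpretation of $x_2$ on each side of the pentagon, to a canonical two-factor form; the entwining pentagon identity is then precisely the statement that the two canonical forms extracted along the two paths agree. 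The main obstacle in either route is the non-commutative $2\times 2$ block inversion inside $\phi^{\{1\},\{2,3\}}$ and $\phi^{\{1,3\},\{2\}}$, combined with the bookkeeping induced by the two interpretations of the slot-$2$ data; however, since every identity reduces to a finite list of non-commutative polynomial relations among the entries, the verification is ultimately mechanical.
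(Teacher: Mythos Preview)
Your primary route---deriving each map entrywise from its re-factorization and then verifying (\ref{entw_Pe}) by direct componentwise tracking, exactly as in the proof of Proposition~\ref{prop3:2}---is precisely what the paper does; the only cosmetic difference is that the paper places all three factors in $(\mathbb{A}^\times)^6$ (with unused components passing through) rather than in $\mathbb{A}^3\times\mathbb{A}^6\times\mathbb{A}^3$, which sidesteps the ``reinterpretation'' language altogether. Your structural alternative via the absorption rules $A^{\{1,3\}}A^{\{1\}}\in A^{\{1,3\}}$ and $A^{\{2\}}A^{\{2,3\}}\in A^{\{2,3\}}$ is not pursued in the paper; the rules themselves are correct, but because the slot-$2$ data is read as $A^{\{2,3\}}$ on one side and $A^{\{1,3\}}$ on the other (as you note), the two pentagon paths do not reduce to the \emph{same} triple matrix product, so associativity alone does not close the argument and one is led back to the componentwise check.
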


\begin{proof}
Let us prove the statements of this Proposition.
\begin{enumerate}
\item The matrices $A^{\{1\}},A^{\{2\}},A^{\{2,3\}}$ and $A^{\{1,3\}}$ that participate in (\ref{refa3:1})-(\ref{refa3:3}), respectively read
\begin{align*}
A^{\{1\}}(x)=\begin{pmatrix}
 x^{1,1}& x^{1,2}&x^{1,3}   \\
0 & 1 &0 \\
0&0&1
\end{pmatrix}, && A^{\{2\}}(x)=\begin{pmatrix}
1&0&0   \\
x^{2,1}& x^{2,2}&x^{2,3} \\
0&0&1
\end{pmatrix},\\
A^{\{2,3\}}(x)=\begin{pmatrix}
1&0&0   \\
x^{2,1}& x^{2,2}&x^{2,3} \\
x^{3,1}& x^{3,2}&x^{3,3}
\end{pmatrix},&& A^{\{1,3\}}(x)=\begin{pmatrix}
x^{1,1}& x^{1,2}&x^{1,3}    \\
0& 1&0 \\
x^{3,1}& x^{3,2}&x^{3,3}
\end{pmatrix}.
\end{align*}
The matrix re-factorization $A^{\{1\}}(u)A^{\{2,3\}}(v)=A^{\{2,3\}}(y)A^{\{1\}}(x),$ explicitly reads
\begin{align*}
\begin{multlined}[t]
\begin{pmatrix}
u^{(1)}+u^{(2)}v^{(1)}+u^{(3)}v^{(4)}&u^{(2)}v^{(2)}+u^{(3)}v^{(5)}&u^{(2)}v^{(3)}+u^{(3)}v^{(6)}\\
v^{(1)}&v^{(2)}&v^{(3)}\\
v^{(4)}&v^{(5)}&v^{(6)}
\end{pmatrix}\\
=\begin{pmatrix}
x^{(1)}&x^{(2)}&x^{(3)}\\
y^{(1)}x^{(1)}&y^{(2)}+y^{(1)}x^{(2)}&y^{(3)}+y^{(1)}x^{(3)}\\
y^{(4)}x^{(1)}&y^{(5)}+y^{(4)}x^{(2)}&y^{(6)}+y^{(4)}x^{(3)}
\end{pmatrix},
\end{multlined}
\end{align*}
and defines uniquely the mapping $\phi^{\{1\},\{2,3\}}$. Defining $x:=\left(x^{(1)},x^{(2)},x^{(3)},x^{(4)},x^{(5)},x^{(6)}\right),$ and similarly define $y,z,u,v,w,$   mapping $\phi^{\{1\},\{2,3\}}_{12}$ explicitly reads
\begin{align} \label{Pentagon12}
\phi^{\{1\},\{2,3\}}_{12}:(x,y,z)\mapsto (u,v,w),
\end{align}
where
\begin{align*}
\begin{aligned}
v^{(1)}=y^{(1)}x^{(1)},&&v^{(2)}=y^{(2)}+y^{(1)}x^{(2)},&&v^{(3)}=y^{(3)}+y^{(1)}x^{(3)},\\
v^{(4)}=y^{(4)}x^{(1)},&&v^{(5)}=y^{(5)}+y^{(4)}x^{(2)},&&v^{(6)}=y^{(6)}+y^{(4)}x^{(3)},
\end{aligned}\\
\begin{aligned}
u^{(2)}=\left(x^{(3)}\left(v^{(6)}\right)^{-1}-x^{(2)}\left(v^{(5)}\right)^{-1}\right)\left(v^{(3)}\left(v^{(6)}\right)^{-1}-v^{(2)}\left(v^{(5)}\right)^{-1}\right)^{-1}\\
u^{(3)}=\left(x^{(2)}\left(v^{(2)}\right)^{-1}-x^{(3)}\left(v^{(3)}\right)^{-1}\right)\left(v^{(5)}\left(v^{(2)}\right)^{-1}-v^{(6)}\left(v^{(3)}\right)^{-1}\right)^{-1},\\
u^{(1)}=x^{(1)}-u^{(2)}v^{(1)}-u^{(3)}v^{(4)},\\
u^{(j)}=x^{(j)},\quad j=4,5,6.
\end{aligned}
\end{align*}
In (\ref{refa1}) it is verified that the re-factorization problem $A^{\{1\}}(u)A^{\{2\}}(v)=A^{\{2\}}(y)A^{\{1\}}(x)$ is equivalent to the map $\phi^{\{1\},\{2\}}$  (\ref{New_Pent}).  Since we will make use of it later, in what  follows we present the map $\phi^{\{1\},\{2\}}_{13}$  that reads
\begin{align*}
\phi^{\{1\},\{2\}}_{13}:(x,y,z)\mapsto (u,y,w),
\end{align*}
where
\begin{align*}
\begin{aligned}
w^{(1)}=z^{(1)}x^{(1)},&&w^{(2)}=z^{(2)}+z^{(1)}x^{(2)},&&w^{(3)}=z^{(3)}+z^{(1)}x^{(3)},\\
 w^{(4)}=z^{(4)},&&w^{(5)}=z^{(5)},&&w^{(6)}=z^{(6)},\\
 u^{(1)}=x^{(1)}-x^{(2)}\left(w^{(2)}\right)^{-1}w^{(1)},&&u^{(2)}=x^{(2)}\left(w^{(2)}\right)^{-1},&&u^{(3)}=x^{(3)}-x^{(2)}\left(w^{(2)}\right)^{-1}w^{(3)},\\
  u^{(4)}=x^{(4)},&&u^{(5)}=x^{(5)},&&u^{(6)}=x^{(6)}.
\end{aligned}
\end{align*}
Working similarly, we can prove that (\ref{refa3:3}) is equivalent to the mapping $\phi^{\{1,3\},\{2\}}.$ Explicitly, mapping $\phi^{\{1,3\},\{2\}}_{23}$ reads
\begin{align*}
\phi^{\{1,3\},\{2\}}_{23}:(x,y,z)\mapsto (x,v,w)
\end{align*}
where
\begin{flalign} \label{Pentagon23}
\begin{aligned}
w^{(1)}=z^{(1)}y^{(1)}+z^{(3)}y^{(4)},&&w^{(4)}=z^{(4)},&&\\
w^{(2)}=z^{(2)}+z^{(1)}y^{(2)}+z^{(3)}y^{(5)},&&w^{(5)}=z^{(5)},&&\\
w^{(3)}=z^{(1)}y^{(3)}+z^{(3)}y^{(6)},&&w^{(6)}=z^{(6)},\\
v^{(1)}=y^{(1)}-y^{(2)}\left(w^{(2)}\right)^{-1}w^{(1)},&&v^{(4)}=y^{(4)}-y^{(5)}\left(w^{(2)}\right)^{-1}w^{(1)},&&{}\\
v^{(2)}=y^{(2)}\left(w^{(2)}\right)^{-1},&&v^{(5)}=y^{(5)}\left(w^{(2)}\right)^{-1},&&{}\\
v^{(3)}=y^{(3)}-y^{(2)}\left(w^{(2)}\right)^{-1}w^{(3)},&&v^{(6)}=y^{(6)}-y^{(5)}\left(w^{(2)}\right)^{-1}w^{(3)}.&&{}
\end{aligned}
\end{flalign}
\item The proof that the maps $\phi^{\{1\},\{2,3\}}, \phi^{\{1\},\{2\}},$ and $\phi^{\{1,3\},\{2\}}$ satisfy the non-constant pentagon equation follows by direct calculation. For example,  the left hand side of the non-constant pentagon equation  (\ref{entw_Pe})
     applied to $y^{(5)}$ gives
\begin{align*}
y^{(5)}\xmapsto[]{\phi^{\{1\},\{2,3\}}_{12}} y^{(5)}+y^{(4)}x^{(2)}  \begin{multlined}[t]  \xmapsto[]{\phi^{\{1\},\{2\}}_{13}}   y^{(5)}+y^{(4)}x^{(2)}\left(z^{(2)}+z^{(1)}x^{(2)}\right)^{-1}  \\ \xmapsto[]{\phi^{\{1,3\},\{2\}}_{23}}
\left(y^{(5)}+y^{(4)}x^{(2)}\right)\left(w^{(2)}+w^{(1)}x^{(2)}\right)^{-1}=\widehat y^{(5)},
\end{multlined}
\end{align*}
that coincides with the application of the right hand side of (\ref{entw_Pe})     to $y^{(5)}$  since
\begin{align*}
y^{(5)}\begin{multlined}[t] \xmapsto[]{\phi^{\{1,3\},\{2\}}_{23}} y^{(5)}\left(z^{(2)}+z^{(1)}y^{(2)}+z^{(3)}y^{(5)}\right)^{-1}\\ \xmapsto[]{\phi^{\{1\},\{2,3\}}_{12}} \left(y^{(5)}+y^{(4)}x^{(2)}\right)\left(w^{(2)}+w^{(1)}x^{(2)}\right)^{-1}=\widehat y^{(5)},
\end{multlined}
\end{align*}
where the expressions $w^{(1)}, w^{(2)}$ and $w^{(3)}$ are given in (\ref{Pentagon23}).

    In what follows we provide the final expressions
\begin{align*}
(\widehat x, \widehat y, \widehat z)=\left\{\begin{aligned}
                      \phi^{\{1\},\{2,3\}}_{12}\circ \phi^{\{1\},\{2\}}_{13}\circ \phi^{\{1,3\},\{2\}}_{23}(x,y,z),\\
                      \phi^{\{1,3\},\{2\}}_{23}\circ \phi^{\{1\},\{2,3\}}_{12}(x,y,z),
                    \end{aligned}\right.
\end{align*}
where
\begin{align*}
\widehat x^{(2)}=\begin{multlined}[t]\left(x^{(3)}\left(y^{(6)}+y^{(4)}x^{(3)}\right)^{-1}-x^{(2)}\left(y^{(5)}+y^{(4)}x^{(2)}\right)^{-1}\right)\\
\left(\left(y^{(3)}+y^{(1)}x^{(3)}\right)\left(y^{(6)}+y^{(4)}x^{(3)}\right)^{-1}-\left(y^{(2)}+y^{(1)}x^{(2)}\right)\left(y^{(5)}+y^{(4)}x^{(2)}\right)^{-1}\right)^{-1},
\end{multlined}&&\\
\widehat x^{(3)}=\begin{multlined}[t]\left(x^{(2)}\left(y^{(2)}+y^{(1)}x^{(2)}\right)^{-1}-x^{(3)}\left(y^{(3)}+y^{(1)}x^{(3)}\right)^{-1}\right)\\
\left(\left(y^{(5)}+y^{(4)}x^{(2)}\right)\left(y^{(2)}+y^{(1)}x^{(2)}\right)^{-1}-\left(y^{(6)}+y^{(4)}x^{(3)}\right)\left(y^{(3)}+y^{(1)}x^{(3)}\right)^{-1}\right)^{-1},
\end{multlined}&&\\
\widehat x^{(1)}=x^{(1)}-\widehat x^{(2)}y^{(1)}x^{(1)}-\widehat x^{(3)}y^{(4)}x^{(1)},&&\\
\widehat x^{(j)}=x^{(j)},\quad j=4,5,6,&& 
\end{align*}
\begin{align*}
\widehat y^{(1)}=y^{(1)}x^{(1)}-\left(y^{(2)}+y^{(1)}x^{(2)}\right)\left(x^{(2)}+\left(w^{(1)}\right)^{-1}w^{(2)}\right)^{-1}x^{(1)},&&\\
\widehat y^{(2)}=\left(y^{(2)}+y^{(1)}x^{(2)}\right)\left(w^{(2)}+w^{(1)}x^{(2)}\right)^{-1},&&\\
\widehat y^{(3)}=y^{(3)}+y^{(1)}x^{(3)}-\left(y^{(2)}+y^{(1)}x^{(2)}\right)\left(w^{(2)}+w^{(1)}x^{(2)}\right)^{-1}\left(w^{(3)}+w^{(1)}x^{(3)}\right),&&\\
\widehat y^{(4)}=y^{(4)}x^{(1)}-\left(y^{(5)}+y^{(4)}x^{(2)}\right)\left(w^{(2)}+w^{(1)}x^{(2)}\right)^{-1}w^{(1)}x^{(1)},&&\\
\widehat y^{(5)}=\left(y^{(5)}+y^{(4)}x^{(2)}\right)\left(w^{(2)}+w^{(1)}x^{(2)}\right)^{-1},&&\\
\widehat y^{(6)}=y^{(6)}+y^{(4)}x^{(3)}-\left(y^{(5)}+y^{(4)}x^{(2)}\right)\left(w^{(2)}+w^{(1)}x^{(2)}\right)^{-1}\left(w^{(3)}+w^{(1)}x^{(3)}\right),&&
\end{align*}
\begin{align*}
\widehat z^{(1)}=w^{(1)}x^{(1)},&&\\
\widehat z^{(2)}=w^{(2)}+w^{(1)}x^{(2)},&&\\
\widehat z^{(3)}=w^{(3)}+w^{(1)}x^{(3)},&&\\
\widehat z^{(j)}=z^{(j)},\quad j=4,5,6,&&
\end{align*}
where again $w^{(1)}, w^{(2)}$ and $w^{(3)}$ are given in (\ref{Pentagon23}).
\end{enumerate}
\end{proof}
Note that among the  mappings $\phi^{\{1\},\{2,3\}}, \phi^{\{1\},\{2\}},$ and $\phi^{\{1,3\},\{2\}},$  that participate in the entwining pentagon relation (\ref{entw_Pe}), only $\phi^{\{1\},\{2\}}$  is a constant pentagon map. As an outcome, mapping $\phi^{\{1\},\{2\}}$ is not $M\ddot{o}b$ equivalent neither to $\phi^{\{1\},\{2,3\}}$ nor to $\phi^{\{1,3\},\{2\}}.$ In addition, it can be checked that mapping $\Lambda:=\phi^{\{1\},\{2,3\}}\circ \left(\phi^{\{1\},\{2\}}\right)^{-1},$ is not  a boundary (see Definition \ref{boundary}) since it is not of the form $\lambda \times id_6,$ where $\lambda: \left(A^{\times}\right)^6\rightarrow \left(A^{\times}\right)^6,$ and $id_6$ the identity map acting on $\left(A^{\times}\right)^6.$ So mappings $\phi^{\{1\},\{2,3\}}, \phi^{\{1\},\{2\}},$ and $\phi^{\{1,3\},\{2\}},$ constitute a triple of {\em genuine} entwining pentagon maps.

As a final remark, the entwining pentagon maps of Proposition \ref{fin_prop}, as they stand do not satisfy the ten-term relation (\ref{ten-term}). In order the ten-term relation to be satisfied, it might be  that the introduction of  appropriate boundaries is necessary.

\section{Conclusions} \label{sec:4}
In this article we have proposed a specific class of matrices that their associated re-factorization problems turned equivalent either to pentagon,  reverse-pentagon, or Yang-Baxter maps. We  thoroughly  examined  the order $N=2$ and $N=3$ matrices and explicitly obtained  the associated maps.
For higher order matrices of the proposed form,  further studies are required.

 As an example, let us  briefly consider the case where the order of the matrices that participate in  re-factorization problems is $N=4.$ In this case, the maps
$\phi^{\{i_1,i_2\},\{j_1,j_2\}},$ where $i_1\neq i_2,$  $j_1\neq j_2,$ and $\{i_1,i_2\}\cap\{j_1,j_2\}=\emptyset,$ are constant pentagon maps and are equivalent to the map $\phi^{\{1,2\},\{3,4\}}.$ Mapping $\phi^{\{1,2\},\{3,4\}}$  can be obtained from  the homogeneous normalization map (\ref{norma:m00}) by considering $x^{(i)},y^{(i)},$ $i=1,2$ of (\ref{norma:m00}) to be $2\times 2$ matrices with entries in $A^{\times}.$ Note that the maps $\phi^{\{1\},\{2,3\}},$ $\phi^{\{2,3\},\{4\}},$ and $\phi^{\{1\},\{4\}}$ satisfy the following entwining Yang-Baxter equation
\begin{align*}
\phi^{\{1\},\{2,3\}}_{12}\circ \phi^{\{1\},\{4\}}_{13}\circ \phi^{\{2,3\},\{4\}}_{23}= \phi^{\{2,3\},\{4\}}_{23}\circ \phi^{\{1\},\{4\}}_{13} \circ \phi^{\{1\},\{2,3\}}_{12}.
\end{align*}
 Also from the maps $\phi^{\{i_1,i_2\},\{j_1,j_2\}},$ we can obtain as reductions the maps $\phi^{\{i\},\{j\}},$  $i\neq j\in \{1,\ldots,4\}$ which each one of them is a constant  pentagon map  and  equivalent (coincide up to conjugation) to the map $\phi^{\{1\},\{2\}}.$ In addition, by direct computation we can show that  the following entwining Yang-Baxter equations are satisfied
\begin{align*}
\phi^{\{i\},\{j\}}_{12}\circ \phi^{\{i\},\{k\}}_{13}\circ \phi^{\{j\},\{k\}}_{23}= \phi^{\{j\},\{k\}}_{23}\circ \phi^{\{i\},\{k\}}_{13}\circ \phi^{\{i\},\{j\}}_{12}, && i\neq j\neq k \neq i\in \{1,\ldots,4\},
\end{align*}
Furthermore, the  maps $\phi^{\{1\},\{2,3,4\}}$ and $\phi^{\{1,3,4\},\{2\}}$ are not pentagon but they participate in the following entwining pentagon equation
\begin{align} \label{ent_4_1}
\phi^{\{1\},\{2,3,4\}}_{12}\circ \phi^{\{1\},\{2\}}_{13}\circ \phi^{\{1,3,4\},\{2\}}_{23}=\phi^{\{1,3,4\},\{2\}}_{23}\circ \phi^{\{1\},\{2,3\}}_{12},
\end{align}
 while the non-pentagon map $\phi^{\{1\},\{2,3\}}$, together with the pentagon map $\phi^{\{1,4\},\{2,3\}},$  satisfy
\begin{align} \label{ent_4_2}
\phi^{\{1\},\{2,3\}}_{12}\circ \phi^{\{1\},\{2,3\}}_{13}\circ \phi^{\{1,4\},\{2,3\}}_{23}=\phi^{\{1,4\},\{2,3\}}_{23}\circ \phi^{\{1\},\{2,3\}}_{12}.
\end{align}
Note that (\ref{ent_4_1}), serves as the $N=4$ analogue of (\ref{entw_Pe}),  while the entwining pentagon equation (\ref{ent_4_2}) and the maps that satisfy it do not exist in the $N=3$ case.
Finally, we anticipate that the class of matrices introduced in this article, possibly with some modification, by participating  into the {\em local} pentagon equation \cite{Dimakis:2015,Dimakis:2018}, could determine novel hexagon maps i.e. maps that satisfy the {\em hexagon} equation \cite{Korepanov:2011}.

\section*{Acknowledgements}
\noindent The author would like to thank Igor G. Korepanov for the motivation to initiate with these studies and for the many valuable discussions on
pentagon, higher polygon and simplex equations. Furthermore the author  is thankful to Andrew Hone for the invitation and host at the  University of Kent, where part of these studies were initiated by the support of London Mathematical Society LMS grant ref. 42110 and by additional funding from the University of Kent.\\

\parbox{.135\textwidth}{\begin{tikzpicture}[scale=.03]
\fill[fill={rgb,255:red,0;green,51;blue,153}] (-27,-18) rectangle (27,18);
\pgfmathsetmacro\inr{tan(36)/cos(18)}
\foreach \i in {0,1,...,11} {
\begin{scope}[shift={(30*\i:12)}]
\fill[fill={rgb,255:red,255;green,204;blue,0}] (90:2)
\foreach \x in {0,1,...,4} { -- (90+72*\x:2) -- (126+72*\x:\inr) };
\end{scope}}
\end{tikzpicture}} \parbox{.85\textwidth}
{This research is part of the project No. 2022/45/P/ST1/03998  co-funded by the National Science Centre and the European Union Framework Programme
 for Research and Innovation Horizon 2020 under the Marie Sklodowska-Curie grant agreement No. 945339. For the purpose of Open Access, the author has applied a CC-BY public copyright licence to any Author Accepted Manuscript (AAM) version arising from this submission.}

\appendix

\section{The explicit form of mappings $T$ and $P$ of Remark \ref{rem22}} \label{app1}

The Tetrahedron map $T$ introduced in Remark \ref{rem22} explicitly reads $T: (x,y,z)\mapsto (u,v,w),$ where
\begin{align*}
u^{(1)}=\left(z^{(1)}+z^{(2)}\left(x^{(2)}\right)^{-1}\right)^{-1}\left(y^{(1)}+z^{(2)}\left(x^{(2)}\right)^{-1}x^{(1)}\right),&&v^{(1)}=z^{(1)}x^{(1)},\\
u^{(2)}=\left(z^{(1)}+z^{(2)}\left(x^{(2)}\right)^{-1}\right)^{-1}y^{(2)},&&v^{(2)}=z^{(2)}+z^{(1)}x^{(2)},\\
u^{(3)}=\left(z^{(1)}+z^{(2)}\left(x^{(2)}\right)^{-1}\right)^{-1}\left(y^{(3)}+z^{(2)}\left(x^{(2)}\right)^{-1}x^{(3)}-z^{(3)}\right),
&&v^{(3)}=z^{(3)}+z^{(1)}x^{(3)},\\
w^{(1)}=y^{(1)}\left(y^{(1)}+z^{(2)}\left(x^{(2)}\right)^{-1}x^{(1)}\right)^{-1}\left(z^{(1)}+z^{(2)}\left(x^{(2)}\right)^{-1}\right),&&\\
w^{(2)}=\left(1+y^{(1)}\left(x^{(1)}\right)^{-1}x^{(2)}\left(z^{(2)}\right)^{-1}\right)^{-1}y^{(2)},&&\\
w^{(3)}=\left(x^{(1)}\left(y^{(1)}\right)^{-1}+x^{(2)}\left(z^{(2)}\right)^{-1}\right)^{-1}\left(z^{(3)}+x^{(1)}\left(y^{(1)}\right)^{-1}y^{(3)}-x^{(3)}\right).&&
\end{align*}

The 4-simplex map $P$ of Remark \ref{rem22} explicitly reads $P: (x,y,z,t)\mapsto (u,v,w,s),$ where
\begin{align*}
u^{(1)}=\left(1+y^{(2)}\left(t^{(2)}\right)^{-1}t^{(1)}\right)^{-1}y^{(1)},&&v^{(1)}=x^{(1)}+x^{(2)}z^{(1)},\\
u^{(2)}=y^{(2)}\left(t^{(2)}+t^{(1)}y^{(2)}\right)^{-1},&&v^{(2)}=x^{(2)}z^{(2)},\\
u^{(3)}=\left(t^{(2)}\left(y^{(2)}\right)^{-1}+t^{(1)}\right)^{-1}\left(t^{(2)}\left(y^{(2)}\right)^{-1}y^{(3)}-t^{(3)}\right),&&v^{(3)}=x^{(3)}+x^{(2)}z^{(3)},\\
s^{(1)}=z^{(1)}\left(x^{(1)}+x^{(2)}z^{(1)}\right)^{-1},&&w^{(1)}=t^{(1)}y^{(1)},\\
s^{(2)}=\left(1+z^{(1)}\left(x^{(1)}\right)^{-1}x^{(2)}\right)^{-1}z^{(2)},&&w^{(2)}=t^{(2)}+t^{(1)}y^{(2)},\\
s^{(3)}=\left(x^{(2)}+x^{(1)}\left(z^{(1)}\right)^{-1}\right)^{-1}\left(x^{(1)}\left(z^{(1)}\right)^{-1}z^{(3)}-x^{(3)}\right),&&w^{(3)}=t^{(3)}+t^{(1)}y^{(3)}.
\end{align*}






\end{document}